\documentclass[english]{article}
\usepackage[T1]{fontenc}
\usepackage[latin9]{inputenc}
\usepackage{verbatim}
\usepackage{amsthm}
\usepackage{amsmath}
\usepackage{amssymb}

\makeatletter
\theoremstyle{plain}
\newtheorem{thm}{\protect\theoremname}
  \theoremstyle{definition}
  \newtheorem{defn}{\protect\definitionname}
  \theoremstyle{plain}
  \newtheorem*{thm*}{\protect\theoremname}
  \theoremstyle{plain}
  \newtheorem{prop}{\protect\propositionname}

\usepackage{complexity}

\newenvironment{theorem_again}[1]{\noindent{\bf Theorem #1.~}}{}

\makeatother

\usepackage{babel}
  \providecommand{\definitionname}{Definition}
  \providecommand{\propositionname}{Proposition}
  \providecommand{\theoremname}{Theorem}
\providecommand{\theoremname}{Theorem}

\begin{document}

\title{Computational Complexity of Approximate Nash Equilibrium in Large
Games}

\author{Aviad Rubinstein\thanks{UC Berkeley.
I am grateful to Christos Papadimitriou for inspiring discussions, comments, and advice.
I would also like to thank Constantinos Daskalakis and Paul Goldberg for pointing out important missing references in an earlier version.}}
\maketitle
\begin{abstract}
We prove that finding an $\epsilon$-Nash equilibrium in a succinctly
representable game with many players is \PPAD-hard for constant $\epsilon$.
Our proof uses {\em succinct games}, i.e. games whose (exponentially-large)
payoff function is represented by a smaller circuit. Our techniques
build on a recent query complexity lower bound by Babichenko \cite{Bab13_query_complexity}.
\end{abstract}

\section{Introduction}

Nash equilibrium is the central concept in Game Theory. Much of its
importance and attractiveness comes from its {\em universality}:
by Nash's Theorem \cite{Nash}, every finite game has at least one.
The result that finding a Nash equilibrium is \PPAD-complete, and
therefore intractable \cite{NASH-is-PPAD-hard_DGP09,2-player_nash_CDT09}
casts this universality in doubt, since it suggests that there are
games whose Nash equilibria, though existent, are in any practical
sense inaccessible.

Can approximation repair this problem? Chen et al. \cite{2-player_nash_CDT09}
proved that it is also hard to find an $\epsilon$-Nash equilibrium
for any $\epsilon$ that is polynomially small - even for two-player
games. The only remaining hope is a PTAS, i.e. an approximation scheme
constant $\epsilon>0$. {\em Whether there is a PTAS for the Nash
equilibrium problem is the most important remaining open question
in equilibrium computation.} When we say ``$\epsilon$-Nash equilibrium,''
we mean the additive sort; for multiplicative $\epsilon$-Nash equilibria,
Daskalakis \cite{Das13_multiplicative_hardness} shows that the problem
is \PPAD-hard even for two-player games; notice that such a result
is unlikely for additive approximation and a constant number of players,
since a quasi-polynomial time approximation algorithm exists \cite{LMM03_quasi_poly}.

\subsubsection*{Our results}

In this paper we make a modest step towards the inapproximability,
in the standard additive sense, of Nash equilibrium:
\begin{thm}
\label{thm:pwsn}There exist constants $\epsilon,k>0$, such that
given a game with $m$ players and $k$ actions, and a poly-size circuit
that computes the vector of payoffs for each vector of pure strategies,
it is \PPAD-hard to compute an $\epsilon$-Nash equilibrium.
\end{thm}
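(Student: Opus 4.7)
The plan is to ``computationalize'' Babichenko's randomized query-complexity lower bound by substituting his hidden hard structure with a PPAD-hard object that is represented by a polynomial-size circuit. Babichenko's proof produces an $n$-player binary-action game template in which every $\epsilon$-Nash equilibrium must locate some hidden structure (think: the endpoint of a secret long path embedded in $\{0,1\}^n$, or the special input in a planted search problem). My goal is to arrange things so that this hidden structure is precisely the solution of a PPAD-complete instance, so that computing an $\epsilon$-Nash becomes at least as hard as solving that instance.

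First, I would isolate the ``template'' in Babichenko's construction: a map that takes an oracle (equivalently, a short circuit) describing the hidden structure and produces the payoff function of an $n$-player $k$-action game. Second, I would take an arbitrary \textsc{End-of-the-Line} instance, presented via the standard DGP successor/predecessor circuits $S$ and $P$, and compose it with this template to produce the payoff function. Because the composite is still a polynomial-size circuit, the resulting game is succinctly representable as required by the theorem. Third, I would adapt Babichenko's equilibrium analysis so that from any $\epsilon$-Nash equilibrium of the derived game one can read off a vertex of unbalanced degree of the underlying \textsc{End-of-the-Line} graph, thereby solving the PPAD-complete problem. The constant $\epsilon$ in the theorem would be inherited directly from Babichenko's constant, and the constant number of actions $k$ from whatever arity his construction uses (potentially $k=2$, possibly slightly larger after encoding overheads).

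The main obstacle is that Babichenko's bound is naturally stated against a \emph{random} hidden structure, and his combinatorial arguments may exploit properties of that distribution (e.g., a random long path being well-spread in $\{0,1\}^n$) in a way that simply fails for a worst-case \textsc{End-of-the-Line} instance, whose path could be pathologically clustered in a tiny region of the Boolean cube. The technical heart of the proof will therefore be a \emph{structural reduction} showing either (i) that Babichenko's argument is robust to replacing his random oracle with any sufficiently ``path-like'' object, or (ii) that \textsc{End-of-the-Line} can first be reduced to a structured variant in which the path is forced to be spread over a balanced subset of the cube, making the distributional hypotheses in Babichenko's proof automatic. Either route will likely require a Brouwer-style embedding of the path together with a careful accounting of how local deviations in a Nash equilibrium translate into local behavior along the path, and this is where I expect the bulk of the work to lie.
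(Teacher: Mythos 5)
Your high-level plan is essentially the paper's: start from \textsc{End-of-the-Line}, run it through a Babichenko-style construction that embeds a path in the Boolean cube as a Brouwer function, and then reduce Brouwer to Nash; the circuits $S,P$ make everything succinct. But you misdiagnose the main obstacle. The paper does not need the path to be ``well-spread'' in the cube---that property only matters in Babichenko's setting because a \emph{query-complexity} lower bound must argue against adaptive algorithms over a distribution of random instances, whereas here the path is simply whatever the given $S,P$ describe. The actual structural preprocessing is milder and different in kind: an \textsc{End-of-the-Line} edge changes an arbitrary number of bits, so the path is not a path in the \emph{hypercube graph}; the fix is to blow up to $\{0,1\}^{2n+1}$, storing a current vertex $\mathbf{u}$, a next vertex $\mathbf{v}$, and a ``compute-next vs.\ copy'' flag $b$, so that successive configurations differ in a single bit. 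The real technical difficulties you do not identify are (a) that \textsc{End-of-the-Line} yields \emph{many} vertex-disjoint paths and cycles, not one path, so the single-path Hirsch--Papadimitriou--Vavasis mapping must be modified so that every unknown source and sink (not just a unique home cube) becomes a fixed point, while $0^n$ does not---the paper handles this by treating extra sources like sinks via the HPV Cartesian-interpolation gadget; and (b) the concrete Brouwer-to-Nash gadget, which is not a generic ``local deviation'' argument but an imitation game (following Shmaya and Babichenko) with two groups of $2n+2$ players each holding a $[0,1]$-valued discretized action, the first group matching the second and the second matching $f$ applied to the first; the well-supportedness of the $\epsilon$-NE is exactly what pins each player to two adjacent grid points and yields an $O((3+M)/k)$-approximate fixed point. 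Without (a) your derived fixed points need not correspond to solutions of the underlying EOTL instance, and without (b) you have no constant-$\epsilon$ Nash hardness, only constant Brouwer hardness.
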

Even though it is the first result establishing inapproximability
of an additive notion of Nash equilibrium, it has two deficiencies:
(a) it is about {\em$\epsilon$-Nash-equilibrium} (sometimes%
\footnote{See \cite{Das13_multiplicative_hardness} for a short discussion on
terminology.%
} also called {\em $\epsilon$-well-supported Nash equilibrium})
\cite{NASH-is-PPAD-hard_DGP09}, requiring that all actions in the
support be approximately best responses (instead of the mixture being
best response in $\epsilon$-approximate Nash equilibrium); and (b)
it holds for a somewhat awkward class of multiplayer games we call
{\em succinct games}%
\footnote{There is some disagreement in the literature about the terminology:
a very similar definition appeared in \cite{SV12_succint_games} as
{\em circuit games}; \cite{FKS95_poly-definable-games} discuss
{\em polynomially definable games}; while \cite{FIKU08_succinct-zero-sum}
did use the term succinct games.%
}, in which the utility of each player for an action profile is calculated
by a circuit. For example, on such games computing the exact best
response is \PP-hard.

En route to proving Theorem \ref{thm:pwsn}, we also prove a similar
statement about the computational complexity of finding an approximate
fixed-point for a continuous (in fact, Lipschitz) function, which
may be of separate interest.
\begin{thm}
\label{thm:pafp}There exist constants $\epsilon,M>0$, such that
given a $M$-Lipschitz, poly-time computable function $f\colon\left[0,1\right]^{n}\rightarrow\left[0,1\right]^{n}$,
it is \PPAD-hard to find an $\epsilon$-approximate fixed point of
$f$.
\end{thm}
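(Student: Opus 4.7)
The plan is to reduce End-of-Line (the canonical \PPAD-complete problem) to finding an $\epsilon$-approximate fixed point of a Lipschitz function, lifting Babichenko's query-complexity lower bound to a computational one. Given an End-of-Line circuit on $2^{n}$ vertices, I would construct a function $f\colon[0,1]^{N}\rightarrow[0,1]^{N}$ with $N=\mathrm{poly}(n)$ whose value at each point encodes both a ``current vertex'' of the End-of-Line graph and a small displacement that advances this vertex one step along the line. Following the Daskalakis--Goldberg--Papadimitriou template, fixed points then correspond to sources and sinks of the graph; the function would be built from local rules that query the End-of-Line circuit a bounded number of times, making it naturally Lipschitz and polynomial-time computable.

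The substantive step is pushing the hardness threshold from inverse-polynomial $\epsilon$ up to a constant. For this I would adapt the core idea behind Babichenko's query lower bound: represent the current vertex redundantly across $\Theta(n)$ coordinates using a good error-correcting code, so that at any $\epsilon$-approximate fixed point in the sup-norm only a small fraction of coordinates can be far from their target, and (list-)decoding still recovers a well-defined vertex. The dynamics would then advance this decoded vertex, the advance being interpolated continuously across a thin transition region so that the Lipschitz constant $M$ stays bounded independently of $n$, and so that the overall circuit description remains polynomial.

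The main obstacle, I expect, is ruling out spurious approximate fixed points that do not decode to any line endpoint. A point could decode ambiguously between two vertices, or it could lie in the interpolation region where the advance dynamics almost cancel. Handling this requires a robustness analysis of the encoding---essentially stability of decoding under $\ell_{\infty}$-perturbations of size $\epsilon$---combined with a geometric argument showing that the ``transition'' region in $[0,1]^{N}$ is small relative to the displacement $f(x)-x$ produced by the advance step, so that any approximate fixed point must lie in a true decoding cell whose decoded vertex is a source or sink. Reconciling these two requirements while simultaneously bounding $M$ and the circuit size is the technical heart of the argument, and the resulting construction should then feed directly into the reduction to succinct games used for Theorem~\ref{thm:pwsn}.
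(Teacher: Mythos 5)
Your proposal takes a genuinely different route from the paper, and in one important respect it runs counter to it: you invoke the ``Daskalakis--Goldberg--Papadimitriou template,'' whereas the paper devotes a subsection of its Techniques section to explaining precisely why it \emph{departs} from the DGP framework. The paper observes that in the DGP-style subcube coloring, a $(1/n)$-approximate fixed point in $L^{\infty}$ can be manufactured from a distribution over only $n$ of the $n+1$ colors (e.g.\ the uniform mixture gives displacement $(1/n,\dots,1/n)$), so that framework is intrinsically limited to inverse-polynomial hardness in sup-norm. What the paper does instead is (i) first embed the \textsc{EndOfTheLine} graph as a collection of vertex-disjoint paths in $\{0,1\}^{2n+1}$ using a current-vertex/next-vertex/copy-bit encoding, and then (ii) invoke the Hirsch--Papadimitriou--Vavasis geometric construction, which embeds these paths as ``tubes'' in a designated slice of $[0,1]^{2n+2}$ and comes with the key quantitative properties already baked in: the displacement $g=f-\mathrm{id}$ is $79$-Lipschitz, and $\|g(\mathbf{x})\|_{\infty}\ge 1/88$ at every $\mathbf{x}$ not at a path endpoint. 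The only new work is extending HPV from a single path to many paths and closed cycles, with the starting points of non-trivial paths treated like endpoints rather than like HPV's unique home subcube. No error-correcting codes appear anywhere.

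Your error-correcting-code mechanism is the piece I would flag as a gap rather than an alternative. First, the motivation as stated is confused: at an $\epsilon$-approximate fixed point in sup-norm, \emph{every} coordinate of $f(\mathbf{x})-\mathbf{x}$ is within $\epsilon$ of zero, so the phenomenon ``only a small fraction of coordinates far from target'' does not arise from the fixed-point condition; it would arise in $L^{1}$ or $L^{2}$, which is not the norm used here. Second, the real difficulty is exactly the one you identify in your last paragraph---ruling out spurious approximate fixed points where $\mathbf{x}$ sits in an interpolation/transition region and the advance dynamics nearly cancel---and the proposal does not actually resolve it; you say it ``requires a robustness analysis'' and a ``geometric argument,'' but those are precisely the hard parts, and it is not evident they can be made to work within the DGP-style coloring. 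The HPV construction sidesteps all of this by making the default displacement $\delta\xi_{2n+2}$ uniform on the entire slice outside the tube, so that all subcubes ``look the same from the outside'' and there is no ambiguous decoding to worry about. (The ECC idea does resemble a technique that appears in later work on inapproximability of Nash, but it is not what this paper does, and as presented it is not a complete argument.) You also skip the paper's first reduction step---embedding \textsc{EndOfTheLine} into disjoint paths on the hypercube graph---which is what makes the HPV machinery applicable with linear dimension $2n+2$ and a uniform constant Lipschitz bound.
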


\paragraph{Pure equilibria and \PPAD-hardness }

It is interesting to note that all our hard game instances {\em have
pure equilibria}. To the best of our knowledge, this is the first
setting where \PPAD-hardness is proved for games which have pure
equilibria. Naturally, in any game with a poly-size payoff matrix,
it is easy to find any pure equilibrium.

\subsection{Related works}

Schoenebeck and Vadhan \cite{SV12_succint_games} studied comprehensively
the computational complexity of Nash Equilibria in succinct games
and other concise representations. They characterized the computational
complexity, for different levels of approximation, of the following
questions: is a given strategy profile a Nash equilibium? does there
exist a pure equilibrium? does there exist a Nash equilibrium in which
the payoff to a given player meets a given guarantee? They also prove
that the problem of finding an $\epsilon$-Nash equilibrium (for constant
$\epsilon$) in circuit games where each player has two strategies
is \BPP-hard (for a promise-problem version of BPP), and belongs
to $\P^{\MA}$. Finding a tighter complexity classification is stated
as an open question. 

We note that different variants of succinct games were also studied
by \cite{PY94_bounded-rationality,FKS95_poly-definable-games,FIKU08_succinct-zero-sum}.

\subsubsection*{Related works on query complexity}

There are several interesting results on the query complexity of approximate
Nash equilibria, where the algorithm is assumed to have black-box
access to the exponential-size payoff function. In other words, in
this setting the payoff function is allowed to be arbitrarily complex.

A recent paper by Hart and Nisan \cite{HN13_query_complexity_correlated}
proves that any deterministic algorithm needs to query at least an
exponential number of queries to compute any $\epsilon$-Nash equilibrium
- and even for any $\epsilon$-correlated equilibrium. For $\epsilon$-correlated
equilibrium, on the other hand, Hart and Nisan show a randomized algorithm
that uses a number of queries polynomial in $n$ and $\epsilon^{-1}$.

Of particular interest to us, is a very recent paper by Babichenko
\cite{Bab13_query_complexity}, that extends the hardness result of
Hart and Nisan to show that any randomized algorithm still requires
an exponential number of queries to find an $\epsilon$-Nash equilibrium.
Our proof is inspired by Babichenko's work and builds on his techniques.

Finally, yet a newer paper by Goldberg and Roth \cite{GR14_query_complexity_concise-WSNE}
characterizes the query complexity of approximate coarse correlated
equilibrium in games with many players. More important for our purpose
is their polynomial upper bound on the query complexity of $\epsilon$-Nash
equilibria for any family of games that have {\em any} concise representation.
This result is to be contrasted with (a) Babichenko's query complexity
lower bound, which uses a larger family of games, and (b) our result
which applies exactly to this setting and gives a lower bound on the
{\em computational complexity}. 

A much older yet very interesting and closely related result is that
of Hirsch, Papadimitriou, and Vavasis \cite{HPV89}. Hirsch et al.
show that any deterministic algorithm for computing a Brouwer fixed
point in the oracle model must make an exponential -in the dimension
$n$ and the approximation $\epsilon$- number of queries for values
of the function. The techniques in \cite{HPV89} have proven particularly
useful both in Babichenko's work, and in ours.

\subsubsection*{Related works on succinct representations of other objects}

Although our notion of succinct representation is somewhat non-standard
in game theory, similar succinct problems have been considered before.
Galperin and Wigderson \cite{GW84_succint_graphs} and Papadimitriou
and Yannakakis \cite{PY86_succint_graphs} studied the computation
of graph properties for exponential-size graphs given access to a
circuit that locally computes the adjacency matrix.

Much more recently, Dobzinski and Vondrak studied optimization of
succinctly-represented objects for submodular optimization \cite{DV12_computational_submodular}
and for combinatorial auctions \cite{DV12_computational_auctions}.
As in our case, in both of those settings similar hardness results
were previously known in the value oracle model.

\subsection{Techniques\label{sec:Techniques}}

Our techniques in this paper are significantly different from previous
works on \PPAD-hardness in games such as Daskalakis et al. \cite{NASH-is-PPAD-hard_DGP09}
and Chen et al. \cite{2-player_nash_CDT09}. In particular two ingredients
of our reduction are fundamentally different from the so-called ``DGP
framework'': the construction of the Brouwer function, and the reduction
from Brouwer to Nash. We note that in both cases our techniques are
much closer to the recent work of Babichenko \cite{Bab13_query_complexity}.

\paragraph{From Brouwer to Nash}

Our reduction from Brouwer to Nash follows an argument presented in
Eran Shmaya's blog \cite{Shm12_blog}. Each player has actions in
$\left[0,1\right]$, and the players are divided into two groups:
the first group tries to imitate the second group, while the second
group tries to imitate the Brouwer function applied to the actions
chosen by the first group.

In \cite{NASH-is-PPAD-hard_DGP09,2-player_nash_CDT09}, per contra,
the probability assigned to each action corresponds, roughly, to a
variable in $\left[0,1\right]$. While this construction is more stringent
on the number of actions, it requires a relatively complex averaging
gadget. More importantly, this averaging gadget as implemented by
Chen et al. seems to require a polynomial blow-up of the error. Thus,
it is not clear how to achieve a constant hardness of approximation
in this way.

\paragraph{The Brouwer function}

In order to construct the hard Brouwer function we use a construction
due to Hirsch, Papadimitriou, and Vavasis \cite{HPV89} that embeds
paths as mappings over $\left[0,1\right]^{n}$, in a delicate way
which we describe later.

Chen et al. \cite{2-player_nash_CDT09}, on the other hand, divide
the unit hypercube into subcubes of constant edge-length, and specify
a color for each subcube. The $i$-th color corresponds to $\xi_{i}$,
i.e. the $i$-th unit vector, whereas a special red color corresponds
to $-\sum\xi_{i}$. Any vector of players' mixed strategies corresponds
to a distribution over neighboring subcubes, and we get a fixed point
whenever the expectation of the corresponding vectors is $0^{n}$.

Clearly, any distribution corresponding to an exact fixed point must
have support over a panchromatic neighborhood, i.e. a neighborhood
with subcubes corresponding to all $n+1$ colors. Finally, it is shown
that it is \PPAD-hard to find such panchromatic neighborhood. In
fact, a panchromatic neighborhood is still necessary for any $\Theta\left(1/n\right)$-approximate%
\footnote{Approximation here is in the sense of $L^{\infty}$. In fact, it seems
to give a constant hardness of approximation in $L^{1}$, but unfortunately
it is not clear how to use that for a reduction to Nash equilibria. %
} fixed point. However, a $\left(1/n\right)$-approximate fixed point
can be achieved from a distribution over only $n$ colors: for example,
taking each of the colors $i\in\left[n\right]$ with probability $1/n$
results in the expected vector $\left(1/n,\dots,1/n\right)$.

\section{Preliminaries}

Throughout this paper we use the max-norm as the default measure of
distance. In particular, when we say that $f$ is $M$-Lipschitz we
mean that for every $\mathbf{x}$ and \textbf{$\mathbf{y}$ }in the
domain of $f$, $\left\Vert f\left(\mathbf{x}\right)-f\left(\mathbf{y}\right)\right\Vert _{\infty}\leq M\left\Vert \mathbf{x}-\mathbf{y}\right\Vert _{\infty}$.

\subsubsection*{The {\sc EndOfTheLine} problem}

Our reduction starts form the {\sc EndOfTheLine} problem. This problem
was implicit in \cite{PPAD_Pap94}, and explicitly by defined Daskalakis
et al. \cite{NASH-is-PPAD-hard_DGP09}.
\begin{defn}
{\sc EndOfTheLine}: (\cite{NASH-is-PPAD-hard_DGP09}) Given two
circuits $S$ and $P$, with $n$ input bits each, such that $P\left(0^{n}\right)=0^{n}\neq S\left(0^{n}\right)$,
find an input $x\in\left\{ 0,1\right\} ^{n}$ such that $P\left(S\left(x\right)\right)\neq x$
or $S\left(P\left(x\right)\right)\neq x\neq0^{n}$.\end{defn}
\begin{thm*}
(Essentially \cite{PPAD_Pap94}) {\sc EndOfTheLine} is \PPAD-complete
(for poly-size $S$ and $P$).
\end{thm*}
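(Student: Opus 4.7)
The plan is to establish both membership and hardness. Membership is essentially definitional: the standard formulation of \PPAD\ (following Papadimitriou) takes \textsc{EndOfTheLine} (or a minor variant) as the canonical complete problem, so the real content is the hardness direction.

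To make the hardness argument precise, I would first recall the semantic definition of \PPAD\ as the class of total \textbf{NP} search problems whose totality is guaranteed by the directed parity argument: in any finite directed graph where every vertex has in-degree and out-degree at most one, the set of unbalanced vertices (those with $\mathrm{indeg}\neq \mathrm{outdeg}$) has even size, so from any known unbalanced vertex one can deduce the existence of another. A \PPAD\ problem $\Pi$ is specified by two polynomial-size circuits $S_\Pi$ and $P_\Pi$ that implicitly describe such a graph on $\{0,1\}^n$ together with a designated unbalanced source. A valid solution is any other unbalanced vertex.

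Given this setup, the reduction from an arbitrary $\Pi\in\PPAD$ to \textsc{EndOfTheLine} is syntactic: one sets the \textsc{EndOfTheLine} successor circuit to be $S_\Pi$ and the predecessor circuit to be $P_\Pi$, modified so that an edge $(x,y)$ exists in the induced graph exactly when $S_\Pi(x)=y$ and $P_\Pi(y)=x$ (the standard ``consistency'' clause, enforced by redefining $S$ to output $x$ on $x$ whenever $P_\Pi(S_\Pi(x))\neq x$, and dually for $P$). A standard relabeling ensures the designated source is $0^n$, i.e.\ $P(0^n)=0^n\neq S(0^n)$. Solutions to this \textsc{EndOfTheLine} instance---vertices $x$ with $P(S(x))\neq x$ or $S(P(x))\neq x\neq 0^n$---are precisely the other unbalanced vertices of the implicit graph, hence valid solutions to $\Pi$.

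The principal obstacle is not conceptual but a matter of careful bookkeeping: one must verify that the consistency modification does not introduce new unbalanced vertices outside the original solution set, that the source condition $P(0^n)=0^n\neq S(0^n)$ can be enforced by a trivial relabeling without altering PPAD-membership, and that the constructed $S,P$ remain polynomial-size. Because the result is ``essentially'' due to \cite{PPAD_Pap94} and the paper uses it only as a starting point, I would present the argument at the level of these encoding conventions and cite Papadimitriou for the detailed construction rather than rederive the entire framework.
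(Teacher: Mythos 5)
Your proposal is correct, and it matches the paper's approach, which is simply to defer to Papadimitriou: the paper states this result without proof and cites \cite{PPAD_Pap94}, precisely because, as you observe, \textsc{EndOfTheLine} (up to the consistency and source-relabeling conventions you spell out) is the canonical problem used to define \PPAD, so completeness is near-definitional once the bookkeeping is done.
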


\subsubsection*{Succinct normal form games}

We consider games with $n$ players. Player $i$ chooses one of $k$
actions $a_{i}\in A_{i}$. The utility of player $i$ for each vector
of actions is given by $u_{i}\colon\times_{j}A_{j}\rightarrow\left[0,1\right]$.

Explicitly describing the $u_{i}$'s requires space exponential in
$n$. In this paper we restrict our attention to games that can be
described {\em succinctly}; namely, there is a poly-size circuit
that computes each of the $u_{i}$'s given a vector of actions $\mathbf{a}\in\times_{j}A_{j}$.

\subsubsection*{$\epsilon$-Nash equilibrium vs $\epsilon$-approximate Nash equilibrium}

A mixed strategy of player $i$ is a distribution $x_{i}\in\Delta A_{i}$.
We say that a vector of mixed strategies \textbf{$\mathbf{x}\in\times_{j}\Delta A_{j}$}
is a {\em Nash equilibrium} if every strategy $a_{i}$ in the support
of $x_{i}$ is a best response to the actions of the mixed strategies
of the rest of the players, $x_{-i}$. Formally, 
\[
\mathbb{E}_{a_{-i}\sim x_{-i}}\left[u_{i}\left(a_{i},a_{-i}\right)\right]=\max_{a'\in A_{i}}\mathbb{E}_{a_{-i}\sim x_{-i}}\left[u_{i}\left(a',a_{-i}\right)\right]\,.
\]
Equivalently, $\mathbf{x}$ is a Nash equilibrium if each mixed strategy
$x_{i}$ is a best mixed response to $x_{-i}$:
\[
\mathbb{E}_{\mathbf{a}\sim\mathbf{x}}\left[u_{i}\left(\mathbf{a}\right)\right]=\max_{x_{i}'\in\Delta A_{i}}\mathbb{E}_{\mathbf{a}\sim\mathbf{x'}}\left[u_{i}\left(\mathbf{a}\right)\right]\,.
\]

Each of those equivalent definitions can be generalized to include
approximation in a different way. (Of course, there are also other
interesting generalizations of Nash equilibria to approximate settings.)
We say that $\mathbf{x}$ is an {\em $\epsilon$-approximate Nash
equilibrium} ({\em $\epsilon$-ANE}) if each $x_{i}$ is an $\epsilon$-best
mixed response to $x_{-i}$:
\[
\mathbb{E}_{\mathbf{a}\sim\mathbf{x}}\left[u_{i}\left(\mathbf{a}\right)\right]\geq\max_{x_{i}'\in\Delta A_{i}}\mathbb{E}_{\mathbf{a}\sim\mathbf{x'}}\left[u_{i}\left(\mathbf{a}\right)\right]-\epsilon\,.
\]

On the other hand, we generalize the first definition of Nash equilibrium
by saying that $\mathbf{x}$ is a {\em $\epsilon$-Nash equilibrium}
({\em $\epsilon$-NE}; sometimes also $\epsilon$-well-supported
Nash equilibrium) if each $a_{i}$ in the support of $x_{i}$ is an
$\epsilon$-best response to $x_{-i}$:
\[
\mathbb{E}_{a_{-i}\sim x_{-i}}\left[u_{i}\left(a_{i},a_{-i}\right)\right]\geq\max_{a'\in A_{i}}\mathbb{E}_{a_{-i}\sim x_{-i}}\left[u_{i}\left(a',a_{-i}\right)\right]-\epsilon\,.
\]

It is easy to see every $\epsilon$-NE is also an $\epsilon$-ANE,
but the converse is false. Given an $\epsilon$-ANE it is possible
to find a $\Theta\left(\sqrt{\epsilon}n\right)$-NE (see e.g. \cite{NASH-is-PPAD-hard_DGP09});
however computational hardness of $n^{-c}$-approximate Nash equilibrium
is a corollary of \cite{2-player_nash_CDT09}.

\section{Main result}
\begin{defn}
Given a game over $m$ players with $k$ actions, and a poly-size
circuit that computes the vector of payoffs for each vector of pure
strategies, {\sc SuccinctNash}$\left(n,k,\epsilon\right)$ is the
problem of computing an $\epsilon$-NE for this game.
\end{defn}
\begin{theorem_again}{\ref{thm:pwsn}} {\sc SuccinctNash}$\left(n,10^{4},10^{-8}\right)$
is \PPAD-hard.

\end{theorem_again}

\subsubsection*{Proof overview}

We begin our proof with the {\sc EndOfTheLine} problem on $\left\{ 0,1\right\} ^{n}$
\cite{NASH-is-PPAD-hard_DGP09}. In the first step, we embed the {\sc EndOfTheLine}
as a collection $H$ of vertex-disjoint paths on the $\left(2n+1\right)$-dimensional
hypercube graph. Given $H$, our second step is to construct a continuous
mapping $f\colon\left[0,1\right]^{2n+2}\rightarrow\left[0,1\right]^{2n+2}$
whose fixed points correspond to ends of paths in $H$. This step
is done using a technique introduced by Hirsch et al. \cite{HPV89}.
Our third and final step is to reduce the problem of finding approximate
fixed points of $f$ to the problem of finding approximate NE in a
$4n+4$-players game via a reduction which appeared in Shmaya's blog
\cite{Shm12_blog}.

\subsection{Embedding the {\sc EndOfTheLine} graph as paths in $\left\{ 0,1\right\} ^{2n+1}$}

Our first step in the reduction is to embed an {\sc EndOfTheLine}
graph as vertex-disjoint paths on the $\left(2n+1\right)$-dimensional
hypercube graph. We first recall that the input to the {\sc EndOfTheLine}
problem is given as two circuits $S$ and $P$, which define a directed
graph over $G$ over $\left\{ 0,1\right\} ^{n}$. Given $S$ and $P$,
we construct a collection $H$ of vertex-disjoint paths over the $\left(2n+1\right)$-dimensional
hypercube graph, such that each starting or end point of a path in
$H$ corresponds to a unique starting or end point of a line in $G$. 

In order to construct our embedding we divide the $2n+1$ coordinates
as follows: the first $n$ coordinates store current vertex $\mathbf{u}$,
the next $n$ coordinates for next vertex in the line, $\mathbf{v}$,
and finally, the last coordinate $b$ stores a compute-next vs copy
bit. When $b=0$, the path proceeds to update $\mathbf{v}\leftarrow S\left(\mathbf{u}\right)$,
bit-by-bit. When this update is complete, the value of $b$ is changed
to $1$. Whenever $b=1$, the path proceeds by copying $\mathbf{u}\leftarrow\mathbf{v}$
bit-by-bit, and then changes that value of $b$ again. Finally, when
$\mathbf{u}=\mathbf{v}=S\left(\mathbf{u}\right)$ and $b=0$, the
path reaches an end point.

Notice that the paths in $H$ do not intersect. Furthermore, given
a vector in $\mathbf{p}\in\left\{ 0,1\right\} ^{2n+1}$, we can output
in polynomial time whether $\mathbf{p}$ belongs to a path in $H$,
and if so which are the previous and consecutive vectors in the path.
It is therefore \PPAD-hard to find a starting or end point of any
path in $H$ other than $0^{n}$.

\subsection{Continuous mapping on $\left[0,1\right]^{2n+2}$}

Our second step, which constructs a continuous mapping given $H$,
is probably the most technically involved. Fortunately, almost all
of the technical work we need was already done by Hirsch et al. \cite{HPV89}. 
\begin{defn}
Given an $M$-Lipschitz, poly-time computable function $f\colon\left[0,1\right]^{n}\rightarrow\left[0,1\right]^{n}$,
{\sc SuccinctBrouwer}$\left(n,M,\epsilon\right)$ is the problem
of computing an $\epsilon$-approximate fixed point of $f$.
\end{defn}
\begin{theorem_again}{\ref{thm:pafp}} {\sc SuccinctBrouwer}$\left(n,80,1/88\right)$
is \PPAD-hard.

\end{theorem_again}
\begin{proof}
We begin with a quick overview of the mapping constructed by Hirsch
et al. \cite{HPV89}. We will then show how to adapt their construction
to fit our reduction. In the following, we denote $g\left(\mathbf{x}\right)=f\left(\mathbf{x}\right)-\mathbf{x}$.

\subsubsection*{The HPV mapping}

Given a path in $\left\{ 0,1\right\} ^{2n+1}$, Hirsch et al. \cite{HPV89}
construct a mapping $f\colon\left[0,1\right]^{2n+2}\rightarrow\left[0,1\right]^{2n+2}$
that satisfies%
\footnote{See also \cite{Bab13_query_complexity} for the choice of constants.
Please note the change in variable names: $h,M,2^{-p}$ in \cite{HPV89}
correspond to $\delta,\lambda,\epsilon$, respectively, in \cite{Bab13_query_complexity}.%
}:
\begin{enumerate}
\item $g$ is $79$-Lipschitz (thus, $f$ is $80$-Lipschitz)
\item $\left\Vert g\left(\mathbf{x}\right)\right\Vert _{\infty}\geq1/88$
for every $\mathbf{x}$ that does not correspond to the endpoint of
the path
\item The value of $g$ at each point $\mathbf{x}$ depends only on whether
the path passes through the subcube corresponding to $\mathbf{x}$,
and in which direction.
\end{enumerate}
However, for our purposes, it does not suffice to embed just a single
path. In particular, we have one path whose starting point we know
(and must not correspond to a fixed point), and many other paths (and
cycles), whose end points -and starting points- correspond to additional
fixed points.

Luckily, the same construction of \cite{HPV89} continues to work
in our case, with only minor changes. In order to explain those modifications,
we first briefly recall some of the details in the original construction.

We divide the $\left[0,1\right]^{2n+2}$ hypercube into smaller subcubes
of edge-size $h$. In \cite{HPV89}, a single path over $\left\{ 0,1\right\} ^{2n+1}$
corresponds to a $\left(2n+1\right)$-dimensional sequence of subcubes,
called the {\em tube}, that all lie in a special designated {\em slice}
of $\left[0,1\right]^{2n+2}$. 

The {\em home subcube}, the subcube that corresponds to the beginning
of the path, is special: the flow from all subcubes that do not belong
to the path leads to this subcube. 

For the purpose of adapting this construction, the most important
property is that on (and near) the outer facets of the tube, i.e.
the facets that do not face other subcubes in the path, $g\left(\mathbf{x}\right)=\delta\mathbf{\xi}_{2n+2}$,
where $\mathbf{\xi}_{2n+2}$ is the $\left(2n+2\right)$-unit vector,
and $\delta$ is some small parameter (constant in this paper). The
same also holds for all points in the slice that do not belong to
the tube. Intuitively, this means that {\em all subcubes -whether
they belong to the tube or not- look the same from the outside} (except
for the two facets that continue the path).

\subsubsection*{Embedding multiple paths}

Given a collection $H$ of non-intersecting paths in $\left\{ 0,1\right\} ^{2n+1}$,
we construct a mapping $f\colon\left[0,1\right]^{2n+2}\rightarrow\left[0,1\right]^{2n+2}$
in a similar fashion. Essentially, we construct many tubes (some of
which may form close cycles) in the same slice. 

If any path in $H$ passes through some $\mathbf{p}\in\left\{ 0,1\right\} ^{2n+1}$,
then $\mathbf{p}$ corresponds to a subcube on which $f$ is defined
exactly the same as in \cite{HPV89}. Likewise, every end point of
path in $H$ corresponds to a subcube on which $f$ is defined exactly
the same as the unique end point in \cite{HPV89}.

Our construction differs in the starting points of the paths in $H$.
In \cite{HPV89}, the starting point corresponds to the home subcube,
which is \emph{universally unique}: the flow from any subcube outside
the path leads towards that point. Indeed we cannot imitate this behaviour
for the starting point of every path in $H$. However, this does not
complicate our reduction, because all other starting points are also
(\PPAD-) hard to find. In particular we can construct $f$ in a similar
fashion to the end points of the paths, thereby creating additional
fixed points.

For each starting point, consider the corresponding subcube of $\left[0,1\right]^{2n+2}$.
The values of $f$ on the facet $F_{1}$ in the direction of the path,
are already determined by the next subcube in the path. Now, we let
$g\left(\mathbf{x}\right)=\delta\mathbf{\xi}_{2n+2}$ uniformly for
every point $\mathbf{x}$ on the opposite facet, $F_{0}$. For any
$\mathbf{x}'$ between the facets we interpolate by taking the weighted
average of the values $f\left(\mathbf{x}_{0}^{'}\right)$ and $f\left(\mathbf{x}_{1}^{'}\right)$
on the projections of $\mathbf{x}'$ on $F_{0}$ and $F_{1}$, respectively.
(This corresponds to \cite{HPV89}'s ``Cartesian interpolation'').
Notice that this subcube also satisfies $g\left(\mathbf{x}\right)=\delta\mathbf{\xi}_{2n+2}$
for all points on facets that do not face the rest of the path.

Outside the subcubes corresponding to the additional paths, all the
properties of the mapping are trivially preserved. On the interface
between any subcube in an additional path, and any other subcube which
is not consecutive in the same path, all the properties are again
preserved since adding the paths does not change the value of $f$
near those facets. Finally within each path, $f$ is constructed exactly
the same way as the single path in \cite{HPV89}, and therefore it
is easy to see that all the properties continue to hold - except of
course the new fixed points at the starting points of the additional
paths. 

Finally, we conclude that it is \PPAD-hard to find an approximate
fixed point of $f$.
\end{proof}

\subsection{From Brouwer to WSNE}

Our third and final step in the proof reduces the problem of finding
approximate fixed points in $f$ to that of finding approximate well-supported
Nash equilibria. The reduction we use is based on \cite{Shm12_blog}
and appears almost exactly in this format in \cite{Bab13_query_complexity}.
\begin{prop}
(Essentially \cite{Bab13_query_complexity}) {\sc SuccinctBrouwer}$\left(n,M,\epsilon\right)$
$\leq_{P}$ {\sc SuccinctNash}$\left(2n,k+1,\frac{3}{4k^{2}}\right)$,
where $k=\lceil\frac{3+M}{\epsilon}\rceil$\end{prop}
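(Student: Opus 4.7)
The plan is to use the imitation-game construction from Shmaya's blog on which Babichenko's proof is built. Build a game with $2n$ players partitioned into two groups of $n$, an ``imitator'' group $A$ and a ``Brouwer-evaluator'' group $B$. Each player has $k+1$ pure actions identified with the uniform grid $G_k = \{0, 1/k, 2/k, \ldots, 1\} \subset [0,1]$. Group $A$'s $i$-th player is rewarded for matching group $B$'s $i$-th player's action via $u^A_i(\mathbf{a}) = 1 - (a^A_i - a^B_i)^2$, while group $B$'s $i$-th player is rewarded for matching the $i$-th coordinate of $f$ evaluated on $A$'s pure profile via $u^B_i(\mathbf{a}) = 1 - (a^B_i - f_i(\mathbf{a}^A))^2$. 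Both payoffs lie in $[0,1]$ and are computed by a poly-size circuit in the input $\mathbf{a}$ given the circuit for $f$ (with bounded-precision arithmetic on the rationals in $G_k$), so the game fits the succinct format.

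The core observation is that squared-distance payoffs turn the variance of an opponent's mixed strategy into an additive constant that does not depend on the own action and hence cancels from best-response comparisons. Writing $\tilde{x}^B_i$ for the expected value of $x^B_i$, the expected payoff to group $A$'s $i$-th player from pure action $a$ is $1 - (a - \tilde{x}^B_i)^2 - \mathrm{Var}(x^B_i)$; the best pure response is the grid point nearest to $\tilde{x}^B_i$, whose squared error is at most $1/(4k^2)$. Hence in any $\tfrac{3}{4k^2}$-WSNE, every $a$ in the support of $x^A_i$ satisfies $(a - \tilde{x}^B_i)^2 \leq 1/(4k^2) + 3/(4k^2) = 1/k^2$, i.e.\ $|a - \tilde{x}^B_i| \leq 1/k$. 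The analogous argument for group $B$ gives $|a - \mu_i| \leq 1/k$ for every $a$ in the support of $x^B_i$, where $\mu_i := \mathbb{E}_{\mathbf{a}^A \sim \mathbf{x}^A}[f_i(\mathbf{a}^A)]$.

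Finally, I would propagate these local support bounds into a Brouwer statement about $\tilde{\mathbf{x}}^A$. Averaging gives $|\tilde{x}^A_i - \tilde{x}^B_i| \leq 1/k$, and each realization $\mathbf{a}^A \sim \mathbf{x}^A$ lies within $\|\cdot\|_\infty$-distance $O(1/k)$ of $\tilde{\mathbf{x}}^A$, so $M$-Lipschitzness of $f$ yields $|\mu_i - f_i(\tilde{\mathbf{x}}^A)| \leq O(M/k)$. A triangle inequality then produces $|\tilde{x}^A_i - f_i(\tilde{\mathbf{x}}^A)| \leq O((1+M)/k)$, which is at most $\epsilon$ once $k \geq \lceil (3+M)/\epsilon \rceil$, so $\tilde{\mathbf{x}}^A$ is the desired $\epsilon$-approximate fixed point of $f$. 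The main obstacle is the careful bookkeeping of three distinct error sources — grid discretization, cancellation of variance via the squared-distance payoff shape, and Lipschitz slippage from moving $f$ inside an expectation — and choosing $k$ so that the WSNE slack $\tfrac{3}{4k^2}$ is tight enough to force concentrated supports while still being loose enough to preserve \PPAD-hardness through the reduction.
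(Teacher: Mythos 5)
Your construction is the same imitation game as the paper's (two groups, squared-distance payoffs, variance cancellation), and the overall logical arc is identical. The one substantive difference is the choice of fixed-point candidate: the paper rounds $\mathbb{E}(b_i)$ to a grid point $\alpha_i$ and shows the support of $x^A_i$ is contained in $\{\alpha_i,\alpha_i+1/k\}$, whereas you work directly with the expectations $\tilde{x}^A_i$. Your route is fine in spirit, but the bookkeeping as written is too loose to verify the precise $k$ in the statement. Concretely, you only establish that each $a$ in the support of $x^A_i$ satisfies $|a-\tilde{x}^B_i|\le 1/k$, from which the best you can conclude about the deviation of a realized profile from its mean is $|a^A_i-\tilde{x}^A_i|\le 2/k$ (triangle through $\tilde{x}^B_i$). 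Feeding that into Lipschitzness gives $|\mu_i - f_i(\tilde{\mathbf{x}}^A)|\le 2M/k$ and hence a final bound of $(2+2M)/k$, which with $k=\lceil(3+M)/\epsilon\rceil$ exceeds $\epsilon$ whenever $M>1$. Your ``$O((1+M)/k)$'' hides exactly the constant that matters.

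The fix is the observation the paper makes explicitly and you skip: a $\frac{3}{4k^2}$-best-response constraint against a squared-distance payoff forces the support of $x^A_i$ into at most \emph{two consecutive} grid points (one grid step, not the full $2/k$-wide window around $\tilde{x}^B_i$). Once you know the support lies in a single interval $[\alpha_i,\alpha_i+1/k]$, the mean $\tilde{x}^A_i$ also lies there, so $|a^A_i-\tilde{x}^A_i|\le 1/k$, the Lipschitz term drops to $M/k$, and your decomposition $|\tilde{x}^A_i-\tilde{x}^B_i|+|\tilde{x}^B_i-\mu_i|+|\mu_i-f_i(\tilde{\mathbf{x}}^A)|$ gives $(2+M)/k\le(3+M)/k$, comfortably within the claimed $k$. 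So the approach is the same as the paper's and works; you just need to sharpen the support bound (two adjacent grid points, not three) before propagating it through the Lipschitz step.
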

\begin{proof}
We construct a game with two groups of $2n+2$ players each. The action
set of each player corresponds to $\left\{ 0,1/k,\dots,1\right\} $.
We denote the choice of strategies for the first group $\mathbf{a}=\left(a_{1}\dots a_{2n+2}\right)$,
and $\mathbf{b}=\left(b_{1},\dots b_{2n+2}\right)$ for the second
group. 

Each player in the first group attempts to imitate the behaviour of
the corresponding player in the second group. Her utility is given
by 
\[
u_{i}\left(a_{i},b_{i}\right)=-\left|a_{i}-b_{i}\right|^{2}
\]
The second group players attempt to imitate the value of $f$, when
applied to the vector of actions taken by all the players in the first
group. The utility of the $i$-th player is
\[
v_{i}\left(b_{i},\mathbf{a}\right)=-\left|f_{i}\left(\mathbf{a}\right)-b_{i}\right|^{2}
\]

Observe that when the $i$-th player on the second group (henceforth,
player $\left(i,2\right)$) applies a mixed strategy, the expected
utility for player $\left(i,1\right)$ (the $i$-th player in the
first group) is given by: 
\[
\mathbb{E}\left[u_{i}\left(a_{i},b_{i}\right)\right]=-\left|a_{i}-\mathbb{E}\left(b_{i}\right)\right|^{2}-\mathbf{Var}\left(b_{i}\right)
\]

Let $\alpha_{i}\in\left\{ 0,1/k,\dots,1\right\} $ be such that $\mathbb{E}\left(b_{i}\right)\in\left[\alpha_{i},\alpha_{i}+1/k\right]$,
and assume wlog that $\mathbb{E}\left(b_{i}\right)\in\left[\alpha_{i},\alpha_{i}+1/2k\right]$.
Then we can lower-bound the expected utility of player $\left(i,1\right)$
when playing $\alpha_{i}$: 
\[
\mathbb{E}\left[u_{i}\left(\alpha_{i},b_{i}\right)\right]\geq-\frac{1}{4k^{2}}-\mathbf{Var}\left(b_{i}\right)
\]
On the other hand, for any $\gamma\notin\left\{ \alpha_{i},\alpha_{i}+1/k\right\} $,
\[
\mathbb{E}\left[u_{i}\left(\gamma,b_{i}\right)\right]\leq-\frac{1}{k^{2}}-\mathbf{Var}\left(b_{i}\right)
\]

Therefore in every $\frac{3}{4k^{2}}$-NE, the support of the mixed
strategy of player $\left(i,1\right)$ is restricted to $\left\{ \alpha_{i},\alpha_{i}+1/k\right\} $.

For the second group of players, we have 
\[
\mathbb{E}\left[v_{i}\left(b_{i},\mathbf{a}\right)\right]=-\left|\mathbb{E}\left(f_{i}\left(\mathbf{a}\right)\right)-b_{i}\right|^{2}-\mathbf{Var}\left(f_{i}\left(\mathbf{a}\right)\right)
\]
Let $\beta_{i}\in\left\{ 0,1/k,\dots,1\right\} $ be such that $\mathbb{E}\left(f_{i}\left(\mathbf{a}\right)\right)\in\left[\beta_{i},\beta_{i}+1/k\right]$,
then in every $\frac{3}{4k^{2}}$-NE, the support player $\left(i,2\right)$
is restricted to $\left\{ \beta_{i},\beta_{i}+1/k\right\} $.

Finally, given a $\frac{3}{4k^{2}}$-NE, we use the Lipschitz property
of $f$ to derive an approximate fixed point. Notice that for $\mathbf{\alpha}$
and $\mathbf{\beta}$ as defined above, 
\[
\left|\alpha_{i}-\beta_{i}\right|\leq\left|\alpha_{i}-\mathbb{E}\left(b_{i}\right)\right|+\left|\mathbb{E}\left(b_{i}\right)-\beta_{i}\right|\leq\frac{2}{k}
\]

Likewise,
\[
\left|\beta_{i}-f_{i}\left(\mathbf{\alpha}\right)\right|\leq\left|\beta_{i}-\mathbb{E}\left(f_{i}\left(\mathbf{a}\right)\right)\right|+\left|\mathbb{E}\left(f_{i}\left(\mathbf{a}\right)\right)-f_{i}\left(\mathbf{\alpha}\right)\right|\leq\frac{1}{k}+\frac{M}{k}\mbox{ ,}
\]
where $M$ is the Lipschitz constant of $f$.

Therefore $\left|\mathbf{\alpha}-f\left(\mathbf{\alpha}\right)\right|_{\infty}\leq\frac{3+M}{k}$,
so $\alpha$ is a $\frac{3+M}{k}$-approximate fixed point of $f$.
\end{proof}

\section{Open problems}

We prove that finding an $\epsilon$-well-supported Nash equilibrium
\PPAD-hard even for constant $\epsilon$. This is a modest step towards
a better understanding of the computational complexity of approximating
Nash equilibrium in games with a large number of players. Many important
questions remain open:

\paragraph{NE vs WSNE}

As mentioned earlier, in the domain of constant approximations, Nash
equilibria may be strictly easier than finding well-supported Nash
equilibria. 

{\em What is the complexity of finding $\epsilon$-ANE?}

We note that a similar obstacle was encountered by Babichenko \cite{Bab13_query_complexity},
in the case of query complexity, for essentially the same reasons.

\paragraph{Simpler games}

Other, more restrictive forms of succinct games have been studied
quite extensively before (see e.g. \cite{Pap07_succint-games_inbook}).
It would be interesting to improve our understanding of the complexity
of those games. In particular, 

{\em What is the complexity of finding an $\epsilon$-NE in bounded-degree
graphical games?}

\paragraph{Upper bound on complexity}

Embarrassingly, we do not have a (non-trivial) upper bound on the
complexity of finding an $\epsilon$-NE. One particular obstacle to
showing that this problem belongs to the class \PPAD, is that it
is not even clear how to find an approximate best response to mixed
strategies without using randomness. (In particular, by \cite{SV12_succint_games},
this problem is also \BPP-hard; we do not know whether \BPP\;is
contained in \PPAD.) 

{\em What is the right complexity classification of the problem of
finding an $\epsilon$-NE?}

\section{Addendum}

Since the first version of this paper was posted, we found solutions
to the first two open problems mentioned above \cite{Rub14b_simpler-games}.

\bibliographystyle{plain}
\bibliography{succint_games}

\end{document}